\DeclareSymbolFont{stixletters}{LS1}{stix}{m}{it}
\DeclareMathAccent{\cev}{\mathord}{stixletters}{"91}
\DeclareMathAccent{\vec}{\mathord}{stixletters}{"92}
\DeclareMathAccent{\vecev}{\mathord}{stixletters}{"95}
\newcommand{\cevF}{{\cev {\mathcal F}}}
\newcommand{\lr}[2]{\langle #1, #2 \rangle}
\newcommand{\newP}[1]{\noindent{\bf #1:}}
\newcommand{\ud}{\mathrm{d}}
\def\Re{\mathbb{R}}
\def\argmin{\mathop{\text{\rm arg\,min}}}
\newtheorem{theorem}{Theorem}
\newtheorem{lemma}{Lemma}
\newtheorem{remark}{Remark}
\newtheorem{proposition}{Proposition}
\newcommand{\trace}{\text{Tr}}
\newcounter{rmnum}
\newcounter{anum}
\title{\LARGE \bf
	Time-Reversal of Stochastic Maximum  Principle}
\author{Amirhossein Taghvaei
	\thanks{A. Taghvaei  is with the William E. Boeing Department of Aeronautics and Astronautics at University of Washington, Seattle {\tt\small amirtag@uw.edu}.}
}
\begin{document}
	\maketitle
	\thispagestyle{empty}
	\pagestyle{empty}
	
	\begin{abstract}
		Stochastic maximum   principle (SMP)  specifies a necessary   condition for the solution of a stochastic optimal control problem. The condition involves a coupled system of forward and backward stochastic differential equations (FBSDE) for the state and the adjoint processes. Numerical solution of the FBSDE  is challenging because the boundary condition of the adjoint process  is specified at the terminal time, while the solution should be adaptable to the forward in time filtration of a  Wiener process. In this paper, a ``time-reversal'' of the FBSDE system is proposed that involves integration with respect to a backward in time Wiener process. The  time-reversal is used to propose an iterative Monte-Carlo procedure to solves the FBSDE system and its time-reversal simultaneously. 
		The procedure involves approximating the {F\"ollmer's drift} and solving a regression problem between the state and its adjoint at each time. The procedure  is illustrated for the linear quadratic (LQ)  optimal control problem with a numerical example. 
	\end{abstract}

	\section{Introduction}
	\label{sec:intro}
	This paper is  concerned with the continuous-time stochastic optimal control problems that involve the 
	controlled stochastic differential equation  (SDE)
	
	\begin{equation}\label{eq:dyn}
		\ud X_t = a(X_t,U_t) \ud t + \sigma(X_t) \ud W_t,\quad X_0 \sim p_0, 
	\end{equation}
	where $X:=(X_t)_{t\geq 0}$  is the $n$-dimensional state trajectory, $ U:=( U_t)_{t\geq 0}$  is the $m$-dimensional control input,  $ W:= (W_t)_{t\geq 0}$  is the standard $n$-dimensional Wiener process,  $p_0$  is a probability distribution on $\mathbb R^n$, and $a:\Re^n \times \Re^m \to \Re^n$ and  $\sigma:\Re^n  \to \Re^{n\times n}$ are smooth and globally Lipschitz functions\footnote{In this article, the model is restricted to the case that the diffusion term  is not affected by the control input.}. The control input $U_t$ is assumed to be adapted to the filtration $\mathcal F_t=\sigma\{W_s;0\leq s\leq t\}$, generated by the Wiener process. 
	The objective of the optimal control problem is
	\begin{equation}\label{eq:control-cost}
		\min_{U_t\in \mathcal F_t}\,J(U):=\mathbb E\left[\int_0^{T}\ell(X_t,U_t) \ud t + \ell_f(X_{T})\right],
	\end{equation} 
	where $\ell:\Re^n \times \Re^m \to \Re$ and $\ell_f:\Re^n \to \Re$ are smooth functions, and the notation $U_t \in \mathcal F_t$ means $U_t$ is $\mathcal F_t$-measurable.

	The stochastic maximum principle (SMP) provides a necessary condition for the optimal control input. In order to express SMP, define the Hamiltonian function
	\begin{subequations}
		\begin{equation}\label{eq:H}
			H(x,u,y,z) := \ell(x,u) + y^\top a(x,u) +  \trace(z^\top \sigma(x) ),
		\end{equation}
		for $(x,u,y,z)\in \mathbb R^n \times \Re^m \times \Re^n \times \Re^{n\times n}$
		and the Hamiltonian system of forward-backward SDE (FBSDE) 
		\begin{align}\label{eq:forward-sde}
			\ud X_t &= \frac{\partial H}{\partial y}(X_t,U_t,Y_t,Z_t)\ud t + \sigma(X_t) \ud {W}_t,~ X_0\sim p_0,\\
			-\ud Y_t &= \frac{\partial H}{\partial x}(X_t,U_t,Y_t,Z_t) \ud t- Z_t \ud {W}_t,~ Y_{T}= \frac{\partial \ell_f}{\partial x} (X_{T}).\label{eq:backward-sde}
		\end{align}
		The solution to the FBSDE is a $\mathcal F_t$-adapted  triple $(X,Y,Z)=(X_t,Y_t,Z_t)_{0\leq t\leq T}$. 
		SMP states that if $ \hat U$  is the optimal control input for the objective~\eqref{eq:control-cost} subject to the constraint~\eqref{eq:dyn}, then 
		\begin{align}\label{eq:min-condition}
			\hat U_t \in \argmin_{u }\,H( \hat X_t,u, \hat  Y_t, \hat  Z_t),\quad \text{for}\quad t \in[0,T], 
		\end{align}
		where $\hat  X$, $ \hat Y$, and $ \hat Z$ solve the FBSDE~\eqref{eq:forward-sde}-\eqref{eq:backward-sde} with $U=\hat U$~\cite[Thm. 3.2, Case 1]{yong1999stochastic}. 
	\end{subequations}

	Numerical implementation of the FBSDE  is challenging and it has been subject of research since its introduction. The challenges include (i) obtaining a $\mathcal F_t$-adapted solution $Y_t$ while satisfying a terminal condition, (ii) and obtaining the adjoint diffusion coefficient $Z_t$.  There are generally two approaches to numerically solve the FBSDE, the PDE-based approaches that use the fact that $Y_t = \phi(t,X_t)$, where $\phi$ solves a parabolic pde~\cite{peng1991probabilistic,ma1994solving,ma2002numerical}, and direct numerical approxiamtion of the FBSDE that involve approximating  conditional expectations with respect to the filtration~\cite{zhang2004numerical,bouchard2004discrete,zhao2006new,bender2007forward,zhang2013sparse}.   
	
	In this paper, we explore an alternative approach to numerically solve the FBSDE system that is based on time-reversal of diffusions~\cite{anderson1982reverse,haussmann1986time,millet1989integration,follmer2005entropy,follmer2006time,cattiaux2021time}.   
	The starting point is to formulate the optimal control problem for the time-reversal of the SDE~\eqref{eq:dyn} where the solution $X_t$ and the control $U_t$ become adaptable to a backward filtration of a Wiener process (see~\eqref{eq:dyn-backward}).  Compared to~SDE~\eqref{eq:dyn}, the time-reversal involves additional drift term, referred to as the F\"ollmer's drift~\cite{follmer2005entropy}, that depends on the probability density of $X_t$.  As a result, the reversed SDE is interpreted as a controlled McKean-Vlasov SDE leading to a mean-field type optimal control problem~\cite{bensoussan2013mean,fornasier2014mean,carmona2018probabilistic}. The appropriate SMP for a mean-field optimal control problem  leads to a ``time-reversal'' of the FBSDE~\eqref{eq:forward-sde}-\eqref{eq:backward-sde}, where the solution is required to be adapted to a backward filtration (see Sec.~\ref{sec:FBSDE-reversal}).  
	The time-reversal procedure is justified in Prop.~\ref{prop:FBSDE-reversed} where it is shown that the solution to~\eqref{eq:forward-sde}-\eqref{eq:backward-sde} and its time-reversal admit the same probability law.

The time-reversed FBSDE is numerically useful as it facilitates the numerical integration of the adjoint process. Based on this, an iterative numerical procedure is proposed in Sec.~\ref{sec:numerics} that simultaneously solves the FBSDE and its time-reversal. 
	Each iteration of this algorithm involves (i) the forward Monte-Carlo simulations of the state process; (ii) approximation of the F\"ollmer's drift; (iii) backward  simulation of the state and adjoint processes according to the time-reversed FBSDE;  and (iv) a gradient descent update of the control input to minimize the Hamiltonian function.  The third step   involves solving a regression problem between the state  and adjoint for all time $t\in[0,T]$. 
	  The proposed procedure is presented   for the linear quadratic (LQ) setting, where the approximation of the F\"ollmer's drift simplifies to mean and covariance estimation,  and the regression problem becomes linear. A two-dimensional numerical example is provided to illustrates the performance of the algorithm, while the extension to the nonlinear setup and extensive numerical experiment is subject of future work.

	
	\newP{Relation to the other approaches} This paper is closely related to the recent works that solve stochastic optimal control problem by approximating the solution to the FBSDE system~\cite{exarchos2016learning,han2018solving,exarchos2018stochastic,archibald2020stochastic,pereira2020feynman}. In particular, similar to~\cite{exarchos2016learning}, the proposed approach  involves solving a regression type problem at each time. However, the numerical approximation of the adjoint process is different in comparison to~\cite[Eq. 18]{exarchos2016learning} as it involves an additional term due to the time-reversal effect. 

	
	\section{Preliminaries}
	This section contains a brief introduction to the  technical concepts that are essential to the proposed procedure.  
	\subsection{Backward stochastic integration}
	We follow the notation  introduced in~\cite[Sec. 2]{pardoux1987two}. 
	We use $W_t$ to denote a standard $n$-dimensional Wiener process
	which introduces the forward and backward filtrations 
		\begin{align*}
			{\mathcal F}_t:=&\sigma(W_s;0\leq s\leq t),\quad 
			\cev {\mathcal F}_{t}:=\sigma(W_T-W_s;t\leq s\leq T),
		\end{align*}
	for $t\in[0,T]$ (the backward filtration is increasing as $t \downarrow 0$). 
	%
	For any continuous $\cevF_t$-measurable process $\cev V_t$, 
	the backward stochastic integral 
	\begin{align*}
		\int_{T-t}^T  \cev V_s   \cev \ud  W_s := -\int_0^{t} \cev V_{T-s}  \ud \tilde W_s
	\end{align*}
	where the right-hand-side is the usual  It\^o integral with respect to the Wiener process  $\tilde W_t:=W_{T-t}-W_{T}$.	Numerically, the backward stochastic integral  may be approximated as  	
	\begin{equation*}
		\int_{T-t}^T \cev V_s   \cev \ud  W_s  = \lim_{n \to \infty }\sum_{k=0}^{n-1} \cev V_{T-k\Delta t}(W_{T-k\Delta t} - W_{T-(k+1)\Delta t})
	\end{equation*}
	where $\Delta t = \frac{t}{n}$.
	
	We use the notation  $\cev \ud W_t$ to express stochastic differential equations (SDEs) that involve backward stochastic integration. 
	For example, a SDE of the form
	\begin{equation}
		\ud \cev X_t  = a(\cev X_t) \ud t + \sigma (\cev X_t) \cev \ud W_t,\quad \cev X_T = \xi,\label{eq:reverse-sde}
	\end{equation}
	is a short-hand notation for 
	\begin{align}
		\cev X_{T-t}  &= \xi - \int_{T-t}^T a(\cev X_s) \ud s - \int_{T-t}^T  \sigma (\cev X_s) \cev \ud W_s,~ \forall t \in [0,T]
		\label{eq:integral-form}
	\end{align}
	which defines a $\cevF_t$-adapted solution $\cev X_t$. 
	\begin{lemma}\label{lem:reverse}
		The SDE~\eqref{eq:reverse-sde} is equivalent to the SDE 
		\begin{equation}\label{eq:lem-forward-SDE}
			\ud \tilde X_t  = -a(\tilde X_t) \ud t + \sigma (\tilde X_t) \ud \tilde  W_t,\quad \tilde  X_0 = \xi,
		\end{equation}
		where  $\tilde W_t :=  W_{T-t} - W_T$  and $\tilde X_t := \cev X_{T-t}$ is adapted to the forward filtration $\tilde {\mathcal F}_t :=\sigma\{\tilde W_s;0\leq s\leq t\}$, for $t\in[0,T]$.
	\end{lemma}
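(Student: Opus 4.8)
The plan is to unwind the shorthand notation on both sides and reduce the claim to a deterministic change of variables $s\mapsto T-s$ in the integrals, together with a check that the backward and forward filtrations match up. By definition, \eqref{eq:reverse-sde} is the integral identity \eqref{eq:integral-form}: for every $t\in[0,T]$,
\begin{equation*}
  \cev X_{T-t} = \xi - \int_{T-t}^T a(\cev X_s)\,\ud s - \int_{T-t}^T \sigma(\cev X_s)\,\cev\ud W_s .
\end{equation*}
Setting $\tilde X_t := \cev X_{T-t}$ and using $\xi = \cev X_T = \tilde X_0$, the left side becomes $\tilde X_t$ and the first term on the right becomes $\tilde X_0$.

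Next I would transform the two integrals. For the Lebesgue integral, the substitution $s = T-r$ gives $\int_{T-t}^T a(\cev X_s)\,\ud s = \int_0^t a(\cev X_{T-r})\,\ud r = \int_0^t a(\tilde X_r)\,\ud r$. For the backward stochastic integral, I would invoke its definition $\int_{T-t}^T \cev V_s\,\cev\ud W_s := -\int_0^t \cev V_{T-s}\,\ud\tilde W_s$ with $\cev V_s = \sigma(\cev X_s)$ and $\tilde W_t = W_{T-t}-W_T$, which yields $\int_{T-t}^T \sigma(\cev X_s)\,\cev\ud W_s = -\int_0^t \sigma(\tilde X_s)\,\ud\tilde W_s$. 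Substituting both back, the minus already present in \eqref{eq:integral-form} cancels the minus in the definition of $\cev\ud W$, and we obtain
\begin{equation*}
  \tilde X_t = \tilde X_0 - \int_0^t a(\tilde X_s)\,\ud s + \int_0^t \sigma(\tilde X_s)\,\ud\tilde W_s, \qquad \forall t\in[0,T],
\end{equation*}
which is precisely the integral form of \eqref{eq:lem-forward-SDE}.

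To make ``equivalence'' rigorous rather than merely formal, I would then do the filtration bookkeeping: verify that $\tilde W_t = W_{T-t}-W_T$ is a standard Wiener process and that $\tilde{\mathcal F}_t = \sigma(\tilde W_s;0\le s\le t)$ coincides with $\cevF_{T-t}$ (after the substitution $u = T-s$, $W_{T-s}-W_T = -(W_T-W_u)$ ranges exactly over the generators of $\cevF_{T-t}$). Consequently $\cev X$ is $\cevF$-adapted if and only if $\tilde X$ is $\tilde{\mathcal F}$-adapted, so the integrand $\sigma(\tilde X_s)$ in the displayed Itô integral is $\tilde{\mathcal F}_s$-adapted and the integral is well defined; moreover, since $a,\sigma$ are globally Lipschitz, \eqref{eq:lem-forward-SDE} has a unique strong $\tilde{\mathcal F}$-adapted solution, which through $\cev X_{T-t} = \tilde X_t$ transfers existence and uniqueness of the $\cevF$-adapted solution back to \eqref{eq:reverse-sde}.

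I do not expect a deep obstacle here; the proof is essentially a bookkeeping exercise. The only points requiring care are identifying $\tilde{\mathcal F}_t$ with $\cevF_{T-t}$ so that adaptedness is genuinely preserved, and tracking the two sign conventions (the minus in the definition of the backward integral and the minus inside \eqref{eq:integral-form}) correctly, since it is their cancellation that turns the ``$-\sigma\,\cev\ud W$'' of \eqref{eq:reverse-sde} into the ``$+\sigma\,\ud\tilde W$'' of \eqref{eq:lem-forward-SDE}.
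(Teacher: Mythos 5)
Your proof is correct and follows essentially the same route as the paper's: substitute $\tilde X_t=\cev X_{T-t}$ into the integral form \eqref{eq:integral-form}, change variables $s\mapsto T-s$ in the Lebesgue integral, and apply the definition of the backward stochastic integral so the two minus signs cancel. The additional filtration bookkeeping and the Lipschitz well-posedness remark are sensible elaborations of what the paper leaves implicit, not a different argument.
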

	\begin{proof}
		Using $\tilde X_t = \cev X_{T-t}$ in~\eqref{eq:integral-form} yields 
		\begin{align*}
			\tilde X_t &= \xi - \int_{T-t}^T a(\tilde X_{T-s}) \ud s - \int_{T-t}^T  \sigma (\cev X_{s}) \cev \ud  W_s\\&=\xi - \int_0^t a(\tilde X_s) \ud s + \int_0^t  \sigma (\tilde X_{s}) \ud \tilde W_s,
		\end{align*}
		concluding the forward in time  SDE~\eqref{eq:lem-forward-SDE}. 
	\end{proof}

	\subsection{Time reversal of SDE} \label{sec:time-reversal}
	Consider the two stochastic processes $X = \{X_t;\,0\leq t\leq T\}$ and $\cev X = \{\cev X_t;\,0\leq t\leq T\}$ that are governed by the following SDEs
	\begin{subequations}
		\begin{align}
			\ud X_t& = \tilde a(t,X_t) \ud t + \sigma(X_t) \ud W_t,~ X_0 \sim  p_{0}, \label{eq:sde}\\
			\ud \cev X_t &= \tilde a(t,\cev X_t) \ud t + b(\cev X_t,p_{X_t})\ud t+\sigma(\cev X_t) \cev \ud { W}_t,~ \cev X_{T}  \sim p_{X_{T}} ,\label{eq:sde-reverse}
		\end{align} 
		where,  
		\begin{equation}\label{eq:b-def}
			b(x,p) :=- \frac{1}{p(x)}\nabla \cdot (D(x) p(x)),
		\end{equation}
		$D(x) := \sigma(x)\sigma(x)^\top$, 
		and $p_{X_t}$ is the probability density function of $X_t$.  The F\"ollmer's drift  term~\eqref{eq:b-def} is also interpreted as the score function when $D(x)=I$. 
	\end{subequations}
	The following result from~\cite[Thm. 1.12]{cattiaux2021time} relates the two processes. 
	\begin{theorem}\label{thm:sde-reversal}
		Assume $D(x) \succ 0$ for all $x \in \Re^n$ and the probability law of $X$ satisfies a certain finite entropy condition  (see~\cite[Thm. 1.12]{cattiaux2021time}). 
		Then the probability law of  $X $ and $\cev X$, as stochastic processes, are equal, i.e. $X \overset{d}{=} \cev X$.  
	\end{theorem}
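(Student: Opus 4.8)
The plan is to reduce Theorem~\ref{thm:sde-reversal} to the classical time-reversal result of Haussmann--Pardoux \cite{haussmann1986time} (equivalently the form in \cite{cattiaux2021time}), via the change of variable already established in Lemma~\ref{lem:reverse}. The key observation is that the backward-in-time SDE~\eqref{eq:sde-reverse} is, by the definition of the $\cev\ud W$ notation in~\eqref{eq:integral-form}, a statement about the forward process $\tilde X_t := \cev X_{T-t}$ driven by the time-reversed Wiener process $\tilde W_t := W_{T-t}-W_T$. So I would first rewrite~\eqref{eq:sde-reverse} in forward form: applying Lemma~\ref{lem:reverse} (adapted to include the extra drift $b$ and the time-dependence of $\tilde a$), $\tilde X_t$ solves
\begin{equation*}
\ud \tilde X_t = -\tilde a(T-t,\tilde X_t)\,\ud t - b(\tilde X_t, p_{X_{T-t}})\,\ud t + \sigma(\tilde X_t)\,\ud \tilde W_t,\quad \tilde X_0 \sim p_{X_T}.
\end{equation*}
The goal is then equivalent to showing $\tilde X_t \overset{d}{=} X_{T-t}$ as processes, i.e. that $\tilde X$ has the law of $X$ run backward.

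Second, I would invoke the Haussmann--Pardoux / Cattiaux et al. time-reversal theorem in its standard direction: given the forward diffusion~\eqref{eq:sde} with generator $\tilde a(t,\cdot)$ and diffusion matrix $D = \sigma\sigma^\top$, and under the stated nondegeneracy $D \succ 0$ and finite-entropy/integrability hypotheses, the reversed process $X_{T-t}$ is again an It\^o diffusion, adapted to its own (forward) filtration, with drift
\begin{equation*}
\hat a(t,x) = -\tilde a(T-t,x) + \frac{1}{p_{X_{T-t}}(x)}\nabla\cdot\big(D(x)\,p_{X_{T-t}}(x)\big) = -\tilde a(T-t,x) - b(x,p_{X_{T-t}}),
\end{equation*}
using exactly the definition~\eqref{eq:b-def} of the F\"ollmer drift $b$, and with the same diffusion coefficient $\sigma$. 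Comparing this with the SDE for $\tilde X_t$ above, the drift and diffusion coefficients match on the nose, and the initial law $p_{X_T}$ matches the terminal law of $X$. Since the reversed diffusion is characterized as the unique weak solution of this SDE (weak uniqueness holds under the nondegeneracy and Lipschitz-type assumptions carried over from~\eqref{eq:dyn}), we conclude $\tilde X \overset{d}{=} (X_{T-t})_{t\in[0,T]}$, hence $\cev X \overset{d}{=} X$.

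The steps in order: (i) convert~\eqref{eq:sde-reverse} to the forward SDE for $\tilde X_t$ via Lemma~\ref{lem:reverse}; (ii) quote the precise hypotheses of \cite[Thm.~1.12]{cattiaux2021time} (nondegeneracy of $D$, finite relative entropy / Fisher-information-in-time condition) and note they are assumed; (iii) apply that theorem to identify the SDE satisfied by $(X_{T-t})$ and read off its drift as $-\tilde a(T-t,\cdot) - b(\cdot,p_{X_{T-t}})$; (iv) match coefficients and initial distributions with the SDE for $\tilde X_t$; (v) invoke weak uniqueness to conclude equality in law. The main obstacle, and the only point requiring genuine care rather than bookkeeping, is step (ii)--(iii): one must check that the regularity available here — $a,\sigma$ smooth and globally Lipschitz, plus whatever is needed to ensure $p_{X_t}>0$ with $\nabla\cdot(D p_{X_t})/p_{X_t}$ well-defined and locally integrable along the trajectory — actually falls within the scope of the cited reversal theorem, so that the F\"ollmer drift is finite a.e.\ and the reversed process is a bona fide It\^o diffusion. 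The sign bookkeeping (two sign flips: one from Lemma~\ref{lem:reverse} and one built into the reversal formula) and the direction of the divergence term in~\eqref{eq:b-def} versus the standard statement must also be tracked carefully, but these are routine once the analytic hypotheses are in place.
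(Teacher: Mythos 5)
Your argument is exactly the route the paper takes: Theorem~\ref{thm:sde-reversal} is not proved from scratch but is a restatement of the cited reversal theorem of \cite{cattiaux2021time} (Haussmann--Pardoux type), and the remark following it justifies the equivalence precisely by applying Lemma~\ref{lem:reverse} to convert~\eqref{eq:sde-reverse} into the forward SDE for $\cev X_{T-t}$ and matching its drift $-\tilde a(T-t,\cdot)-b(\cdot,p_{X_{T-t}})$ against the classical reversed-diffusion drift, with the signs working out exactly as you computed. Your additional explicit appeal to weak uniqueness to upgrade the coefficient match to equality in law is a reasonable filling-in of a step the paper leaves implicit.
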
   
	\begin{remark}
		The standard statement  of the time reversal result involves writing the SDE for the time-reversed process $\tilde X_t := X_{T-t}$. Theorem~\ref{thm:sde-reversal} is an equivalent statement: application of Lemma~\ref{lem:reverse} to~\eqref{eq:sde-reverse} gives the SDE for $\cev X_{T-t} \overset{d}{=} X_{T-t} = \tilde X_t$. 
	\end{remark}
	\medskip
	
	In this paper, we like to apply the  result of Thm.~\ref{thm:sde-reversal} to the case of controlled SDEs of the form~\eqref{eq:dyn}. This is possible whenever the control input $U_t$ follows a deterministic state feedback control law of the form $U_t=k(t,X_t)$ for a time-varying function $k:[0,T]\times \Re^n \to \Re^m$.  Under this assumption, we can define $\tilde a(t,x):=a(x,k(t,x))$ and apply  Thm. 1. 
	This  assumption on the control is not restrictive because the optimal control law of the stochastic optimal control problem~\eqref{eq:control-cost} is known to be of this form.  

	\subsection{Wasserstein derivatives}
	Let $\mathcal P_{2}(\Re^n)$ be the space of  probability distributions with finite-second moments. For any $\mu \in \mathcal P_{2}(\Re^n)$, we consider the inner-product of two vector-fields on $\Re^n$ as 
	\begin{equation*}
		\lr{v_1}{v_2}_\mu = \int_{\Re^n} v_1(x)^\top v_2(x)\ud \mu(x),
	\end{equation*}
	which induces the norm $\|v\|_\mu:=\sqrt{\lr{v}{v}}_\mu$. 
	The tangent space $\text{Tan}_\mu( \mathcal P_{2}(\Re^n))$ is identified with the closure of the space of vector-fields of gradient form $\nabla \phi$ with respect to $\| \cdot\|_\mu$ norm~\cite[Lemma 8.4.2]{ambrosio2008gradient}.  The tangent space is equipped with the Riemannian metric $\lr{\nabla \phi_1}{\nabla \phi_2}_\mu$. The Wasserstein derivative of a  real-valued function $F:  \mathcal P_{2}(\Re^n) \to \Re$ at $\mu$, denoted by $\frac{\partial F}{\partial \mu}(\mu) \in \text{Tan}_\mu( \mathcal P_{2}(\Re^n))$, satisfies the identity 
	\begin{align*}
		\lr{\nabla F(\mu)}{\nabla \phi} = \frac{\ud}{\ud t}F((\text{Id}+t\nabla \phi)_\#\mu)\vert_{t=0},
	\end{align*}
	for all $\nabla \phi \in \text{Tan}_\mu( \mathcal P_{2}(\Re^n))$, where $\text{Id}$ denotes the identity map and $\#$ denotes the push-forward operator. See \cite[Def. 5.62]{carmona2018probabilistic} for a rigorous definition of the Wasserstein derivative. 
	In this article, we use the following property of the Wasserstein derivatives, shown in~\cite[Example 3 on page 387]{carmona2018probabilistic}: 
If $F(\mu) = \int_{\Re^n} f(x',\mu)\ud \mu(x')$ and $f$ is differentiable, then
	\begin{align}\label{eq:W-derivative}
		\frac{\partial F}{\partial \mu} (\mu)(x)= \frac{\partial f}{\partial x} (x,\mu) + \int \frac{\partial f}{\partial \mu}(x',\mu)(x)\ud \mu(x'). 
	\end{align}
	
	\subsection{PDE solution of FBSDE}
	Consider the FBSDE~\eqref{eq:forward-sde}-\eqref{eq:backward-sde} assuming the control $U_t = k(t,X_t,Y_t,Z_t)$ for some smooth function $k:[0,T]\times \Re^n \times \Re^n \times \Re^{n\times n} \to \Re^m$. 
	The PDE solution of the FBSDE is based on the hypothesis that  $Y_t = \phi(t,X_t)$ for some smooth function $\phi:[0,T]\times \Re^n \to \Re^n$. Then, by It\^o's formula
	\begin{align*}
		\ud Y_t = &	\frac{\partial \phi}{\partial t} (t,X_t)\ud t + \frac{\partial \phi}{\partial  x}(t,X_t)  \frac{\partial H}{\partial y}(X_t,k(t,X_t,Y_t,Z_t),Y_t,Z_t) \ud t\\&+ \frac{\partial \phi}{\partial  x}(t,X_t)  \sigma(X_t)\ud W_t  + \frac{1}{2}\trace(D(X_t) \frac{\partial^2 \phi}{\partial  x^2}(t,X_t) ) \ud t,
	\end{align*}
	where $\frac{\partial \phi}{\partial  x}$ and $\frac{\partial^2 \phi}{\partial  x^2}$ are the Jacobian and Hessian respectively. 
	Comparing  to~\eqref{eq:backward-sde} conlcudes that  $Z_t =  \frac{\partial \phi}{\partial  x}(t,X_t)  \sigma(X_t)$ and $\phi$ satisfies the PDE
	\begin{align}\nonumber 
		&\frac{\partial \phi}{\partial t}+ 	\frac{\partial \phi}{\partial x} \frac{\partial H}{\partial y} (t,x,k,\phi,\frac{\partial \phi}{\partial  x} \sigma)+ \frac{1}{2}\trace(D \frac{\partial^2 \phi}{\partial  x^2}) \\&+ \frac{\partial H}{\partial x} (t,x,k,\phi,\frac{\partial \phi}{\partial  x} \sigma)= 0,~ \phi(T,x) = \frac{\partial \ell_f}{\partial x} (x),\label{eq:FBSDE-PDE}
	\end{align}
	for all $(x,t)\in\Re^n\times[0,T]$, where, for convenience,  $k$ and $\sigma$ are used to denote $k(t,x,y,z)$ and $\sigma(x)$. The existence of the solution to this parabolic PDE is investigated in~\cite{ma1994solving} under certain regularity assumptions. In this paper, we make the assumption that the solution to the PDE~\eqref{eq:FBSDE-PDE} exists. 
	
	\section{Time-reversal procedure}
	In this section, we introduce the ``time-reversal" formulation of the stochastic control problem~\eqref{eq:dyn}-\eqref{eq:control-cost}. Based on the discussion presented in Sec.~\ref{sec:time-reversal}, we consider  the  time-reversal of~\eqref{eq:dyn} according to 
	\begin{align}
		\ud \cev X_t =& a(\cev X_t,\cev U_t) \ud t + b(\cev X_t,p_{\cev X_t})\ud t+\sigma(\cev X_t) \cev \ud { W}_t,\quad  \cev X_{T}  \sim p_T,\label{eq:dyn-backward}
	\end{align}
	where 
	$\cev \ud  W_t$ denotes a  backward stochastic integration, $b(x,p)$ is defined in~\eqref{eq:b-def}, and 
	$p_{\cev X_t}$ is the probability density function of $\cev X_t$. We consider control inputs $\cev U_t$ that are adapted to the backward filtration $\cev {\mathcal F}_t$ that lead to an $\cevF_t$-adapted solution $\cev X_t$. The time-reversal of the objective~\eqref{eq:control-cost}  becomes 
	\begin{equation}\label{eq:cost-backward}
		\min_{\cev U_t \in \cevF_t }\,\mathbb E\left[\int_0^{t_f} \ell(\cev X_t,\cev U_t) \ud t + \ell_f(\cev X_T)\right]. 
	\end{equation}

	The SDE~\eqref{eq:dyn-backward} involves an additional drift  term $b(\cev X_t,p_{\cev X_t})$ 
	that depends on the probability distribution of ${\cev X_t}$. As a result, it belongs to the class of McKean-Vlasov SDEs, which  requires an appropriate extension of the SMP~\cite[Sec. 6.3]{carmona2018probabilistic}.  
	
	\subsection{SMP for the time-reversed problem}\label{sec:FBSDE-reversal}
	In order to describe the maximum principle for the McKean-Vlasov SDE~\eqref{eq:dyn-backward} with the objective function~\eqref{eq:cost-backward},  define the (backward) Hamiltonian
	\begin{subequations}
		\begin{align}\label{eq:super-H}
			\cev H(x,p,u,y,z) := 
			\ell(x,u) + y^\top (a(x,u) 
			+ b(x,p))
			- \trace(z^\top \sigma(x)).
		\end{align}
	for $(x,p,u,y,z) \in \Re^n \times \mathcal P_2(\Re^n)\times \Re^m \times \Re^n \times \Re^{n\times n}$. Note that, in comparison to the Hamiltonian~\eqref{eq:H}, the sign of $z$ is changed to account for the backward stochastic  integration. 
		The backward Hamiltonian leads to
		the FBSDE 
		\begin{align}\label{eq:forward-sde-reversed}
			\ud \cev X_t &= \!\frac{\partial \cev H}{\partial y}(\cev X_t,p_{\cev X_t},\cev U_t,\cev Y_t,\cev Z_t)\ud t \!+\! \sigma(\cev X_t) \ud \cev{W}_t,~\cev X_{T}\!\sim\! p_T\\\label{eq:backward-sde-reversed}
			-\ud \cev Y_t &= \frac{\partial \cev   H}{\partial x}(\cev X_t,p_{\cev X_t},\cev U_t,\cev Y_t,\cev Z_t) \ud t-  \cev Z_t \ud \cev{W}_t\\&+  \tilde {\mathbb E}\left[\frac{\partial \cev  H}{\partial p}(\tilde X_t,p_{\cev X_t},\tilde U_t,\tilde Y_t,\tilde Z_t)(\cev X_t)\right]\ud t, \quad \cev Y_{t_f} = \frac{\partial \ell_f}{\partial x} (\cev X_{t_f}) \nonumber, 
		\end{align}
		where  $\frac{\partial\cev  H}{\partial p}(x,p,u,y,z):\mathbb R^n \to \mathbb R^n$  is the Wasserstein derivative of $\cev  H$ with respect to $p$, and  the expectation  $\mathbb {\tilde E}$ is  with respect to  $\{\tilde X_t, \tilde Y_t,\tilde Z_t,\tilde U_t\}$ which are independent copies of $\{\cev X_t,\cev Y_t,\cev Z_t,\cev U_t\}$. 
		The SMP states that if $\hat U$  is the optimal control input, then  
		\begin{align}\label{eq:min-condition-reversed}
			\hat U_t \in \argmin_{u \in \mathbb R^m}\,\cev  H( \hat X_t,  p_{\hat X_t},u, \hat  Y_t, \hat  Z_t) 
		\end{align}
		where $\hat  X$, $ \hat Y$, and $ \hat Z$ solve the FBSDE~\eqref{eq:forward-sde-reversed}-\eqref{eq:backward-sde-reversed} with $\cev U=\hat U$~\cite[Thm. 6.14]{carmona2018probabilistic}. 
	\end{subequations}
	
	In the next section, we show that the new  FBSDE~\eqref{eq:forward-sde-reversed}-\eqref{eq:backward-sde-reversed} and the original FBSDE~\eqref{eq:forward-sde}-\eqref{eq:backward-sde} both admit the same probability law for the state and adjoint processes. 
	
	\subsection{Time-reversal relationship}
	We present the result under the assumption that the control inputs $U_t$ and $\cev U_t$ follow the same feedback control law
	\begin{equation}\label{eq:U-feedback}
		U_t = k(t,X_t,Y_t,Z_t),\quad \cev U_t=  k(t,\cev X_t,\cev Y_t,\cev Z_t),
	\end{equation}
for some smooth function $k$.  
	\begin{proposition}\label{prop:FBSDE-reversed}
		Let $(\cev X,\cev Y,\cev Z)$ be the  triple that solves the FBSDE~\eqref{eq:backward-sde-reversed}-\eqref{eq:forward-sde-reversed} with $\cev U_t$ as in~\eqref{eq:U-feedback}.  Assume  the PDE~\eqref{eq:FBSDE-PDE} has a smooth solution $\phi$. Then, 
		\begin{align}\label{eq:FBSDE-reversed-1}
			(\cev Y_t,\cev Z_t)= ( \phi(t,\cev X_t),\frac{\partial \phi}{\partial x} (t,\cev X_t)\sigma(\cev X_t)),\quad \text{for}\quad  t\in [0,T].  
		\end{align} 
		Moreover, 
		assume $(X,Y,Z)$ solves the FBSDE~\eqref{eq:forward-sde}-\eqref{eq:backward-sde}  with $U_t$ as in~\eqref{eq:U-feedback}, and $p_T$ is equal to the probability density function of $X_T$. Then,  the probability law of the quadruple $(\cev X,\cev U,\cev Y,\cev Z)$ is equal to the probability law of  $(X,U,Y,Z)$, i.e. 
		\begin{equation}\label{eq:FBSDE-reversed-2}
			(X,U,Y,Z) \overset{d}{=}(\cev X,\cev U,\cev Y,\cev Z). 
		\end{equation}
	\end{proposition}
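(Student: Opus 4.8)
The plan is to verify that the ansatz $(\cev Y_t,\cev Z_t)=(\phi(t,\cev X_t),\frac{\partial\phi}{\partial x}(t,\cev X_t)\sigma(\cev X_t))$ solves the reversed FBSDE, and then to obtain \eqref{eq:FBSDE-reversed-2} by recognizing \eqref{eq:forward-sde-reversed} as the time-reversed closed-loop state equation and invoking \Theorem{thm:sde-reversal}.

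For \eqref{eq:FBSDE-reversed-1}: since $\frac{\partial\cev H}{\partial y}(x,p,u,y,z)=a(x,u)+b(x,p)$, under the ansatz and with $\cev U_t=k(t,\cev X_t,\cev Y_t,\cev Z_t)$ the forward equation \eqref{eq:forward-sde-reversed} is exactly \eqref{eq:dyn-backward}, so nothing is to be checked there. For the backward equation I pass to $\tilde X_t:=\cev X_{T-t}$, which by \Lemma{lem:reverse} solves a forward It\^o SDE with drift $-\frac{\partial\cev H}{\partial y}$, apply the ordinary It\^o formula to $\phi(T-t,\tilde X_t)$, and translate back to get
\[ \ud\cev Y_t=\Big[\tfrac{\partial\phi}{\partial t}+\tfrac{\partial\phi}{\partial x}\big(a+b\big)-\tfrac12\trace\!\big(D\tfrac{\partial^2\phi}{\partial x^2}\big)\Big]\ud t+\tfrac{\partial\phi}{\partial x}\sigma\,\ud\cev W_t, \]
where the It\^o correction carries a \emph{minus} sign due to the backward integration. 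Reading off the martingale part gives $\cev Z_t=\frac{\partial\phi}{\partial x}(t,\cev X_t)\sigma(\cev X_t)$, and matching the drift against \eqref{eq:backward-sde-reversed} reduces \eqref{eq:FBSDE-reversed-1} to the pointwise identity
\[ \tfrac{\partial\phi}{\partial t}+\tfrac{\partial\phi}{\partial x}\big(a+b\big)-\tfrac12\trace\!\big(D\tfrac{\partial^2\phi}{\partial x^2}\big)+\tfrac{\partial\cev H}{\partial x}+\tilde{\mathbb E}\Big[\tfrac{\partial\cev H}{\partial p}(\cev X_t)\Big]=0, \]
which, after eliminating $\frac{\partial\phi}{\partial t}$ with the PDE \eqref{eq:FBSDE-PDE} (the $\frac{\partial\ell}{\partial x}$ and $\frac{\partial\phi}{\partial x}a$ terms then cancel), becomes an identity among the ``extra'' terms coming from the F\"ollmer drift.

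Establishing that last identity is the crux. Since $\cev H$ depends on $p$ only through $y^\top b(x,p)$ and $b(x,p)=-(\nabla\cdot D)(x)-D(x)\nabla\log p(x)$, I apply \eqref{eq:W-derivative} to the functional $\mu\mapsto\int\phi(t,x')^\top b(x',\mu)\,\ud\mu(x')$; an integration by parts in $x'$ collapses this to the \emph{linear} functional $\mu\mapsto\int\trace\!\big(D(x')\tfrac{\partial\phi}{\partial x}(t,x')\big)\ud\mu(x')$, so that the two terms of \eqref{eq:W-derivative} identify $\tilde{\mathbb E}[\frac{\partial\cev H}{\partial p}(\cev X_t)]$ as $\nabla_x\trace(D\frac{\partial\phi}{\partial x})$ minus a ``diagonal'' term that cancels against the $\frac{\partial b}{\partial x}$-part of $\frac{\partial\cev H}{\partial x}$ and against $\frac{\partial\phi}{\partial x}b$. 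What remains is purely second order in $\sigma$ — a combination of $\trace(D\frac{\partial^2\phi}{\partial x^2})$, $\nabla_x\trace(D\frac{\partial\phi}{\partial x})$, and $\nabla_x\trace(z^\top\sigma)$ with $z=\frac{\partial\phi}{\partial x}\sigma$ — and it vanishes by a direct computation using $D=\sigma\sigma^\top$ (in one dimension it is $-D\phi''+(D\phi')'-2(\phi'\sigma)\sigma'=D'\phi'-2\sigma\sigma'\phi'=0$). Tracking the transposes here — the adjoint structure of the F\"ollmer drift, where one uses $\frac{\partial\phi}{\partial x}=(\frac{\partial\phi}{\partial x})^\top$ — is the delicate bookkeeping and the step I expect to be the main obstacle.

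For \eqref{eq:FBSDE-reversed-2}: by \eqref{eq:FBSDE-reversed-1} the reversed control is the feedback $\cev U_t=\tilde k(t,\cev X_t)$ with $\tilde k(t,x):=k(t,x,\phi(t,x),\frac{\partial\phi}{\partial x}(t,x)\sigma(x))$, and \eqref{eq:forward-sde-reversed} becomes $\ud\cev X_t=\big(a(\cev X_t,\tilde k(t,\cev X_t))+b(\cev X_t,p_{\cev X_t})\big)\ud t+\sigma(\cev X_t)\,\ud\cev W_t$ with $\cev X_T\sim p_T$. On the other side, the same PDE argument applied to the original FBSDE gives $Y_t=\phi(t,X_t)$, $Z_t=\frac{\partial\phi}{\partial x}(t,X_t)\sigma(X_t)$, hence $U_t=\tilde k(t,X_t)$ and $\ud X_t=\tilde a(t,X_t)\ud t+\sigma(X_t)\ud W_t$ with $\tilde a(t,x):=a(x,\tilde k(t,x))$. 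By \Theorem{thm:sde-reversal} in the feedback form of the remark following it, and since $p_T$ equals the law of $X_T$, the time-reversal $\cev X^{0}$ of this closed-loop equation satisfies $\cev X^{0}\overset{d}{=}X$; then $p_{\cev X^{0}_t}=p_{X_t}$, so $\cev X^{0}$ also solves the McKean--Vlasov SDE displayed above, and by well-posedness $\cev X^{0}=\cev X$, i.e. $\cev X\overset{d}{=}X$ in path space. Since $(\cev X,\cev U,\cev Y,\cev Z)=\big(\cev X,\tilde k(\cdot,\cev X),\phi(\cdot,\cev X),\frac{\partial\phi}{\partial x}(\cdot,\cev X)\sigma(\cev X)\big)$ and $(X,U,Y,Z)$ is the image of $X$ under the \emph{same} measurable path functional, $\cev X\overset{d}{=}X$ propagates to \eqref{eq:FBSDE-reversed-2}. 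The soft points are the finite-entropy hypothesis of \Theorem{thm:sde-reversal} for the closed-loop $X$ and uniqueness for the singular McKean--Vlasov SDE, which I would assume in line with the standing solvability assumption on \eqref{eq:FBSDE-PDE}.
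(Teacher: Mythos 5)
Your proposal is correct and follows essentially the same route as the paper: It\^o's formula applied to $\phi(t,\cev X_t)$ with the sign-flipped correction term, elimination of $\partial_t\phi$ via the PDE~\eqref{eq:FBSDE-PDE}, the Wasserstein derivative of $\int \phi^\top b\,\ud\mu$ computed by integration by parts so that the functional becomes linear in $\mu$ (this is exactly the paper's Lemma~\ref{lem:W-derivative}), and finally Theorem~\ref{thm:sde-reversal} applied to the closed-loop drift $\tilde a(t,x)=a(x,\tilde k(t,x))$ for~\eqref{eq:FBSDE-reversed-2}. The only differences are that you justify the negative It\^o correction explicitly through Lemma~\ref{lem:reverse} and flag the McKean--Vlasov uniqueness needed to identify $\cev X$ with the time-reversal of $X$, both points the paper leaves implicit.
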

	
	\medskip 
	\begin{remark}
		This result establishes that the new FBSDE~\eqref{eq:backward-sde-reversed}-\eqref{eq:forward-sde-reversed}  is indeed the time-reversal of the FBSDE~\eqref{eq:forward-sde}-\eqref{eq:backward-sde}. 
		The key element that allows this time-reversal is the restriction of the control $U_t$ to feedback control laws  of the form~\eqref{eq:U-feedback}. This assumption ensures that the control has a Markov property, i.e. it depends only on the current value of the state, conditionally independent of its past and future. As a result, the control is adaptable to both   forward and backward filtrations generated by the state process. This  assumption is appropriate for stochastic optimal control problems since the  optimal control law is of this form.  
	\end{remark}
Before presenting the proof of this proposition, we present 
	the following Lemma that gives the expression for the Wasserstein derivative of the Hamiltonian. 
	\begin{lemma}\label{lem:W-derivative}
		If $\cev Y_t = \phi(t,\cev X_t)$, then 
		\begin{align}\nonumber
			{	\mathbb E}&\left[\frac{\partial \cev H}{\partial p}(\cev X_t,p_{t}, \cev U_t,\cev Y_t,\cev Z_t)(x)\right]\!
			=\\& 
			\frac{\partial}{\partial x} (D(x)  \frac{\partial \phi}{\partial x} (t,x)- b(x,p_t)^\top \phi(t,x)).\label{eq:W-diff-identitity}
		\end{align}
	where $p_t$ is the probability density function of ${\cev X_t}$. 
	\end{lemma}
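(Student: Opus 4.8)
The key observation is that, among the five arguments of $\cev H$, only the term $y^\top b(x,p)$ depends on the measure slot, so $\frac{\partial\cev H}{\partial p}$ is the Wasserstein derivative of $p\mapsto y^\top b(x,p)$ alone. Under the hypothesis $\cev Y_t=\phi(t,\cev X_t)$ and with $\cev X_t\sim p_t$, the left-hand side of \eqref{eq:W-diff-identitity} therefore equals $\int\frac{\partial f}{\partial\mu}(x',p_t)(x)\,\ud p_t(x')$, where $f(x',\mu):=\phi(t,x')^\top b(x',\mu)$. The plan is to recognize this integral as the ``interaction part'' of the Wasserstein derivative of the mean-field functional $F(\mu):=\int f(x',\mu)\,\ud\mu(x')$, and then to show that $F$ is in fact a \emph{linear} functional of $\mu$.

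Concretely, I would apply the identity \eqref{eq:W-derivative} to $F$, which gives $\frac{\partial F}{\partial\mu}(\mu)(x)=\frac{\partial}{\partial x}\big(\phi(t,x)^\top b(x,\mu)\big)+\int\frac{\partial f}{\partial\mu}(x',\mu)(x)\,\ud\mu(x')$; rearranging and setting $\mu=p_t$ shows the quantity we want is $\frac{\partial F}{\partial\mu}(p_t)(x)-\frac{\partial}{\partial x}\big(\phi(t,x)^\top b(x,p_t)\big)$. The next step is to simplify $F$: since $b(x',\mu)\,\mu(x')=-\nabla\cdot\!\big(D(x')\mu(x')\big)$, an integration by parts (the boundary term vanishing by the decay of $\mu$ and the at-most-polynomial growth of $\phi$ and $\nabla\phi$) yields $F(\mu)=\int\trace\!\big(D(x')\tfrac{\partial\phi}{\partial x}(t,x')\big)\,\ud\mu(x')$, which is linear in $\mu$. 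Applying \eqref{eq:W-derivative} to this form — the integrand now being independent of $\mu$ — gives $\frac{\partial F}{\partial\mu}(\mu)(x)=\frac{\partial}{\partial x}\big(\trace(D(x)\tfrac{\partial\phi}{\partial x}(t,x))\big)$, and combining the two displays produces $\frac{\partial}{\partial x}\big(\trace(D(x)\tfrac{\partial\phi}{\partial x}(t,x))-b(x,p_t)^\top\phi(t,x)\big)$, i.e.\ \eqref{eq:W-diff-identitity} (the product $D\tfrac{\partial\phi}{\partial x}$ there being read under a trace).

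The main obstacle is conceptual rather than computational: $b(x',\mu)$ depends on $\mu$ through the value and the gradient of its density at the single point $x'$, so the pointwise Wasserstein derivative $\frac{\partial f}{\partial\mu}(x',\mu)(\cdot)$ is not a bona fide vector field but a distribution (a combination of $\delta_{x'}$ and $\nabla\delta_{x'}$), and it is only the integral against $p_t(x')\,\ud x'$ that is a genuine function of $x$. Accordingly, \eqref{eq:W-derivative} must be used in this distributional sense — or, equivalently, the left-hand side of \eqref{eq:W-diff-identitity} should be read exactly as it enters the adjoint equation \eqref{eq:backward-sde-reversed}, namely as $\frac{\partial F}{\partial\mu}(p_t)(x)-\frac{\partial}{\partial x}f(x,p_t)$, so that the real content of the lemma is that this combination collapses, via the integration by parts above, to a closed-form $x$-derivative. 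A fully rigorous version would mollify $b$, run the same three steps for the smoothed functional (for which \eqref{eq:W-derivative} applies verbatim), and pass to the limit, the required integrability being supplied by the assumed smoothness of $\phi$ and the moment bounds on $p_t$.
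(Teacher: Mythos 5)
Your proposal is correct and follows essentially the same route as the paper: both isolate the $y^\top b(x,p)$ term, introduce the auxiliary functional $F(\mu)=\int \phi(t,x')^\top b(x',\mu)\,\ud\mu(x')$ (the paper's $\beta$), invoke the identity~\eqref{eq:W-derivative} to express the desired integral as $\frac{\partial F}{\partial\mu}(p_t)(x)-\frac{\partial}{\partial x}(b(x,p_t)^\top\phi(t,x))$, and then integrate by parts to show $F$ is linear in $\mu$ with derivative $\frac{\partial}{\partial x}(D(x)\frac{\partial\phi}{\partial x}(t,x))$. Your added remarks on the distributional reading of $\frac{\partial f}{\partial\mu}$ and the mollification step are a welcome clarification the paper leaves implicit, but they do not change the argument.
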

\medskip 

	\begin{proof}
		By the definition of the Hamiltonian
		\begin{align*}
			{	\mathbb E}&\left[\frac{\partial \cev H}{\partial p}(\cev X_t,p_t, \cev U_t,\cev Y_t,\cev Z_t)(x)\right] =  {	\mathbb E}\left[\frac{\partial b}{\partial p}(\cev X_t,p_t)(x)^\top\cev Y_t\right]
			\\&=\! \int \frac{\partial b}{\partial p}(x',p_t)(x)^\top\phi(t,x')p_t(x')\ud x'. 
		\end{align*}
		Define $\beta(p):=\int b(x',p)^\top\phi(t,x')p(x')\ud x'$. Then, the identity~\eqref{eq:W-diff-identitity} implies
		\begin{align*}
			\int \frac{\partial b}{\partial p}(x',p_t)(x)^\top&\phi(t,x')p_t(x')\ud x'  =\\&  \frac{\partial \beta}{\partial p}(x) - \frac{\partial}{\partial x}(b(x,p_t)^\top\phi(t,x)). 
		\end{align*} 
		Direct evaluation of $\frac{\partial \beta}{\partial p}$, using the definition of $b(x,p)$, gives 
		\begin{align*}
			\frac{\partial \beta}{\partial p} &= -  \frac{\partial }{\partial p} \int \nabla \cdot(p(x')D(x')) \phi(t,x')\ud x' \\&=\frac{\partial }{\partial p} \int D(x') \frac{\partial\phi }{\partial x} (t,x') p(x')\ud x' = \frac{\partial}{\partial x} (D(x) \frac{\partial \phi}{\partial x} (t,x)),
		\end{align*}
		concluding the claimed identity\eqref{eq:W-diff-identitity}. 
		%
	\end{proof}

\medskip
	\newP{Proof of Prop.~\ref{prop:FBSDE-reversed}}
		Application of the It\^o rule to $\cev Y_t=\phi(t,\cev X_t)$ implies 
		\begin{align*}
			\ud \cev Y_t = &	\frac{\partial \phi}{\partial t} (t,\cev X_t)\ud t + \frac{\partial \phi}{\partial  x}(t,\cev X_t)  \frac{\partial \cev H}{\partial y}(\cev X_t,\cev U_t,\cev Y_t,\cev Z_t) \ud t\\&+ \frac{\partial \phi}{\partial  x}(t,\cev X_t)  \sigma(\cev X_t) \cev \ud W_t  - \frac{1}{2}\trace(D(\cev X_t) \frac{\partial^2 \phi}{\partial  x^2}(t,\cev X_t) ) \ud t,
		\end{align*}
		where the It\^o correction term appears with the negative sign due to the backward stochastic integration, and we used $p_t$ to denote $p_{\cev X_t}$. Using $\cev Z_t = \frac{\partial \phi}{\partial x} (t,\cev X_t)\sigma(\cev X_t)$, $ \frac{\partial \cev H}{\partial y} = \frac{\partial H}{\partial y} + b$, and the fact that $\phi$ solves~\eqref{eq:FBSDE-PDE},  
		\begin{align*}
			\ud \cev Y_t = & \frac{\partial \phi}{\partial  x}(t,\cev X_t) b(\cev X_t,p_t)\ud t - \frac{\partial H}{\partial x}(\cev X_t,\cev U_t, \cev Y_t,\cev Z_t) \ud t\\&+\cev Z_t \cev \ud W_t  - \trace(D(\cev X_t) \frac{\partial^2 \phi}{\partial  x^2}(t,\cev X_t) ) \ud t. 
		\end{align*}
		On the other-hand, starting from~\eqref{eq:backward-sde-reversed}, using the definition~\eqref{eq:super-H},  and the identity~\eqref{eq:W-diff-identitity}  
		\begin{align*}
			\ud \cev Y_t = &- \frac{\partial H}{\partial x}(\cev X_t,\cev U_t, \cev Y_t,\cev Z_t) \ud t - \frac{\partial b}{\partial x}(\cev X_t,p_t)\cev Y_t\ud t \\&- 2\frac{\partial}{\partial x} \trace(\sigma(\cev X_t)^\top \cev Z_t)\ud t 	 -  \frac{\partial}{\partial x} (D(\cev X_t)  \frac{\partial \phi}{\partial x} (t,\cev X_t))\\&+ \frac{\partial}{\partial x}  (b(\cev X_t,p_t)\phi(t,\cev X_t)) + \cev Z_t \cev \ud W_t.  
		\end{align*}
		Comparing the two derivations and using the  following  two identities 
		\begin{align*}
			&\frac{\partial}{\partial x}  (b(x,p)\phi(t,x)) =  \frac{\partial \phi}{\partial  x}(t,x) b( x,p) + \frac{\partial b}{\partial x}(x,p)\phi(t,x)\\
			& \frac{\partial}{\partial x} (D(x)  \frac{\partial \phi}{\partial x} (t,x)) =  \trace(\frac{\partial D}{\partial x}(x)\frac{\partial \phi}{\partial x}(t,x)) + \trace(D(x) \frac{\partial^2 \phi}{\partial  x^2}(t,x) ). 
		\end{align*}
	concludes~\eqref{eq:FBSDE-reversed-1}. 
Finally, the equality~\eqref{eq:FBSDE-reversed-2} follows by application of Thm.~\ref{thm:sde-reversal} with $\tilde a(t,x) = a(x,k(t,x,\phi(t,x)),\frac{\partial \phi}{\partial x}(t,x) \sigma(x) )$. \hfill \QED
	\section{Linear quadratic setting}\label{sec:LQ}
	In this section, we consider the special case where  the dynamic model is linear, the initial distribution is Gaussian, and the control cost is quadratic, i.e. 
	\begin{align*}
		&a(x,u) = Ax + Bu,\quad \sigma(x)=\sigma,\quad p_0 = \mathcal N(m_0,\Sigma_0)\\
		&\ell(x,u) =\frac{1}{2}\left[x^\top Q x +u^\top Ru \right],\quad \ell_f(x) = \frac{1}{2}x^\top Q_f x 
	\end{align*}
	where 
	%
	%
 $Q,Q_f\succeq$ are assumed to be positive semi-definite, 
	$R\succ 0$  is assumed  to be positive definite, and $\mathcal N(m,\Sigma)$ denotes a Gaussian distribution with mean $m$ and covariance $\Sigma$.   
%
Under the LQ setting, the Hamiltonian~\eqref{eq:H}  becomes 
\begin{equation*}
	H(x,u,y,z) := \frac{1}{2}\left[x^\top Q x +u^\top R u \right]+ y^\top (Ax+Bu) + \trace(\sigma^\top z),
\end{equation*}
concluding the  FBSDE
%
%
%
%
\begin{subequations}
	\begin{align}\label{eq:forward-sde-lin}
		\ud X_t &= AX_t  \ud t+ B U_t \ud t +\ud  W_t,\quad X_0\sim p_0\\\label{eq:backward-sde-lin}
		-\ud Y_t &= A^\top Y_t \ud t + Q X_t \ud t - Z_t \ud W_t,\quad Y_{t_f} = Q_f X_{t_f}. 
	\end{align}
	Moreover, the control input that minimizes the Hamiltonian, according to~\eqref{eq:min-condition}, is equal to 
	$U_t = -R^{-1}B^\top Y_t$. 
	
	We present the solution of the FBSDE system above for an affine form of the control according to 
	\begin{align}\label{eq:affine-U}
	U_t = K_1(t)X_t+ K_2(t)Y_t + K_3(t) Z_t + K_4(t)
	\end{align}
\end{subequations}
for some time-varying parameters $K_1$, $K_2$,$K_3$, and $K_4$ with appropriate dimensions.  
	\begin{proposition}
		Consider the FBSDE~\eqref{eq:forward-sde-lin}-\eqref{eq:backward-sde-lin} 
		with affine form of control as in~\eqref{eq:affine-U}.  Then, the FBSDE is solved with $Y_t =  G_1(t)X_t + G_2(t)$ and $Z_t = G_1(t)\sigma$ where $G_1$ and $G_2$ solve matrix differential equations 
		\begin{subequations}
		\begin{align}\nonumber
			&\dot G_1(t) + G_1(t)A +A^\top G_1(t) +G_1(t)B K_1(t) \\&+  G_1(t)B K_2(t)G_1(t) +Q = 0,\quad G_1(T)=Q_f,\label{eq:G1}\\
			&\dot G_2(t) + G_1(t) B (K_2(t) G_2(t) \nonumber + K_3(t) G_1(t)\sigma + K_4(t))\\& + A^\top G_2(t) = 0,\quad G_2(T)=0. \label{eq:G2}
		\end{align}  
		For the case with the optimal control law, i.e. $K_2(t) = - R^{-1}B^\top$ and $K_1(t)=K_3(t)=K_4(t)=0$, we have $G_2(t)=0$ and $G_1(t)$ solving the Riccati equation
	\begin{align}\nonumber 
		\dot G_1(t)&+ G_1(t) A + A^\top G_1(t) + Q \\&- G_1(t) BR^{-1}B^\top G_1(t),\quad G_{1}(T) = Q_f. \label{eq:Ricatti}
	\end{align}
\end{subequations}
	\end{proposition}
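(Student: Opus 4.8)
The plan is to verify the claimed form of the solution by direct substitution and It\^o calculus, and then to read off the matrix differential equations by matching coefficients. First I would insert the ansatz $Y_t = G_1(t) X_t + G_2(t)$ and $Z_t = G_1(t)\sigma$ into the affine control law~\eqref{eq:affine-U}; this collapses $U_t$ into a purely state-feedback affine form, $U_t = (K_1 + K_2 G_1) X_t + (K_2 G_2 + K_3 G_1 \sigma + K_4)$, so that the forward equation~\eqref{eq:forward-sde-lin} becomes a linear SDE for $X_t$ with drift $(A + B K_1 + B K_2 G_1) X_t + B(K_2 G_2 + K_3 G_1 \sigma + K_4)$ and additive noise. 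Since $p_0$ is Gaussian and the coefficients are affine in $x$, $X_t$ is well-defined and Gaussian on all of $[0,T]$.

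Next I would apply It\^o's formula to $t \mapsto G_1(t) X_t + G_2(t)$. Since this map is affine in $x$, the second-order term vanishes, and the martingale part of $\ud(G_1 X_t + G_2)$ is $G_1(t)\sigma\,\ud W_t$. Comparing with the backward equation~\eqref{eq:backward-sde-lin} rewritten as $\ud Y_t = (-A^\top Y_t - Q X_t)\,\ud t + Z_t\,\ud W_t$ immediately forces $Z_t = G_1(t)\sigma$, which is consistent with the PDE characterization $Z_t = \frac{\partial \phi}{\partial x}\sigma$ applied to $\phi(t,x) = G_1(t) x + G_2(t)$. Matching the drift parts, and substituting $Y_t = G_1 X_t + G_2$ on the right-hand side, yields an identity of the form $M(t) X_t + v(t) = 0$ valid for every realization of $X_t$. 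Because $X_t$ has full support in $\Re^n$ (taking $\sigma$ nonsingular; otherwise one restricts to the affine hull of its support), $M(t)$ and $v(t)$ must vanish separately, and these two conditions are precisely~\eqref{eq:G1} and~\eqref{eq:G2}; the terminal conditions $G_1(T) = Q_f$ and $G_2(T) = 0$ follow by the same support argument applied to $Y_T = Q_f X_T$. Conversely, running the computation backwards shows that for any solution pair $(G_1, G_2)$ of~\eqref{eq:G1}--\eqref{eq:G2} the resulting triple $(X, Y, Z)$ is $\mathcal F_t$-adapted and satisfies both~\eqref{eq:forward-sde-lin} and~\eqref{eq:backward-sde-lin}, which establishes the first claim.

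For the optimal control law $K_2 = -R^{-1} B^\top$ and $K_1 = K_3 = K_4 = 0$, equation~\eqref{eq:G2} degenerates to the linear homogeneous ODE $\dot G_2(t) + (A^\top - G_1(t) B R^{-1} B^\top) G_2(t) = 0$ with $G_2(T) = 0$, whose unique solution is $G_2 \equiv 0$; substituting the same gains into~\eqref{eq:G1} then produces exactly the Riccati equation~\eqref{eq:Ricatti}.

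I expect the main obstacle to be the well-posedness of~\eqref{eq:G1} on the whole interval $[0,T]$: for general affine gains it is a Riccati-type matrix ODE whose solution can in principle escape to infinity in finite time, so one must either assume global solvability or restrict the admissible $K_i$. In the optimal-control specialization this is the classical LQ Riccati equation, whose global existence on $[0,T]$ under $Q, Q_f \succeq 0$ and $R \succ 0$ is standard. A secondary technical point is justifying the step that equates the coefficients of $X_t$: it needs $X_t$ to explore enough of $\Re^n$, which is immediate when $\sigma$ is invertible and otherwise requires a brief argument on the support of $X_t$.
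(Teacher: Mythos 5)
Your proposal is correct and follows essentially the same route as the paper, whose proof is simply the one-line observation that one applies It\^o's rule to $Y_t = G_1(t)X_t + G_2(t)$ and matches terms against~\eqref{eq:backward-sde-lin} using the ODEs for $\dot G_1$ and $\dot G_2$. Your additional remarks on the support argument for coefficient matching and on global solvability of the Riccati-type ODE are sensible refinements of the same verification, not a different method.
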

	\medskip 
	\begin{proof}
		The proof follows upon application of the It\^o rule to $Y(t)=G_1(t)X_t  + G_2(t)$ and using the formula for $\dot{G}_1(t)$ and  $\dot{G}_2(t)$. 
	\end{proof}

\subsection{Time reversal in LQ setting} 
We like to express the time-reversed FBSDE~\eqref{eq:forward-sde-reversed}-\eqref{eq:backward-sde-reversed} in the LQ setting. We assume the control input $\cev U_t$ has the affine form~\eqref{eq:affine-U}. Then, according to Prop.~\ref{prop:FBSDE-reversed}, the probability law of $(\cev X_t,\cev Y_t,\cev X_t,\cev U_t)$ is equal to $(X_t,Y_t,X_t, U_t)$. 
As a result, the probability law of $\cev X_t$, denoted by $p_t$, is  Gaussian, similar to $X_t$, and the drift term  $b(x,p_t) =D\Sigma_t^{-1}(x-m_t) $,  where $m_t$ and $\Sigma_t$ denote the mean and covariance of $p_t$, respectively. This conlcudes the following SDE for $\cev X_t$ 
%
%
\begin{subequations}
\begin{align}\label{eq:forward-sde-lin-reversed}
		\ud \cev X_t = A\cev  X_t \ud t+ B\cev U_t \ud t +  \sigma \cev \ud {W}_t+  &D\Sigma_t^{-1}(\cev X_t - m_t)\ud t,
\end{align}
where $D := \sigma \sigma^\top$. In order express the adjoint process, we evaluate  the  Wasserstein derivative of the Hamiltonian according to Lemma~\ref{lem:W-derivative}, using $\cev Y_t = G_1(t) \cev X_t + G_2(t)$, 
 	\begin{align*}
 	{	\mathbb E}&\left[\frac{\partial \cev H}{\partial p}(\cev X_t,p_t, \cev U_t,\cev Y_t,\cev Z_t)(x)\right]\!
 	=  -D\Sigma_t^{-1} (G_1(t)x+G_2(t))\\&\quad - G_1(t)  D\Sigma_t^{-1}(x-m_t),  
 \end{align*}
concluding the following expression for the adjoint process: 
	\begin{align}\label{eq:backward-sde-lin-reversed}
	-\ud \cev Y_t &= A^\top \cev Y_t \ud t + Q  \cev X_t \ud t - \cev Z_t \ud \cev W_t  \\&-G_1(t)  D\Sigma_t^{-1}(\cev X_t-m_t)  \ud t, 	\quad \cev Y_{t_f} = Q_f \cev X_{t_f}. \nonumber
\end{align}
Application of the It\^o rule verifies that ~\eqref{eq:backward-sde-lin-reversed} is indeed solved with  $\cev Y_t = G_1(t) \cev X_t + G_2(t)$ and $\cev Z_t = G_1(t) \sigma $ where $G_1$ and $G_2$ solve~\eqref{eq:G1}-\eqref{eq:G2}. In the proposed numerical scheme, we do not solve~\eqref{eq:G1}, but compute  the matrix $G_1(t)$ by solving the regression problem
\begin{equation*}
	G_1(t) = \argmin_G\,\mathbb E\left[ \| \cev Y_t - G \cev X_t\|^2 \right] = \text{Cov}(\cev X_t,\cev Y_t)\text{Cov}(\cev Y_t,\cev Y_t)^{-1}
\end{equation*}
where $\text{Cov}(V_1,V_2)$ denotes the covariance matrix of two random variables $V_1$ and $V_2$. 
\end{subequations}

\begin{figure*}
	\centering
	\includegraphics[width=0.3\hsize,trim={50pt 0pt 50pt 0 pt},clip]{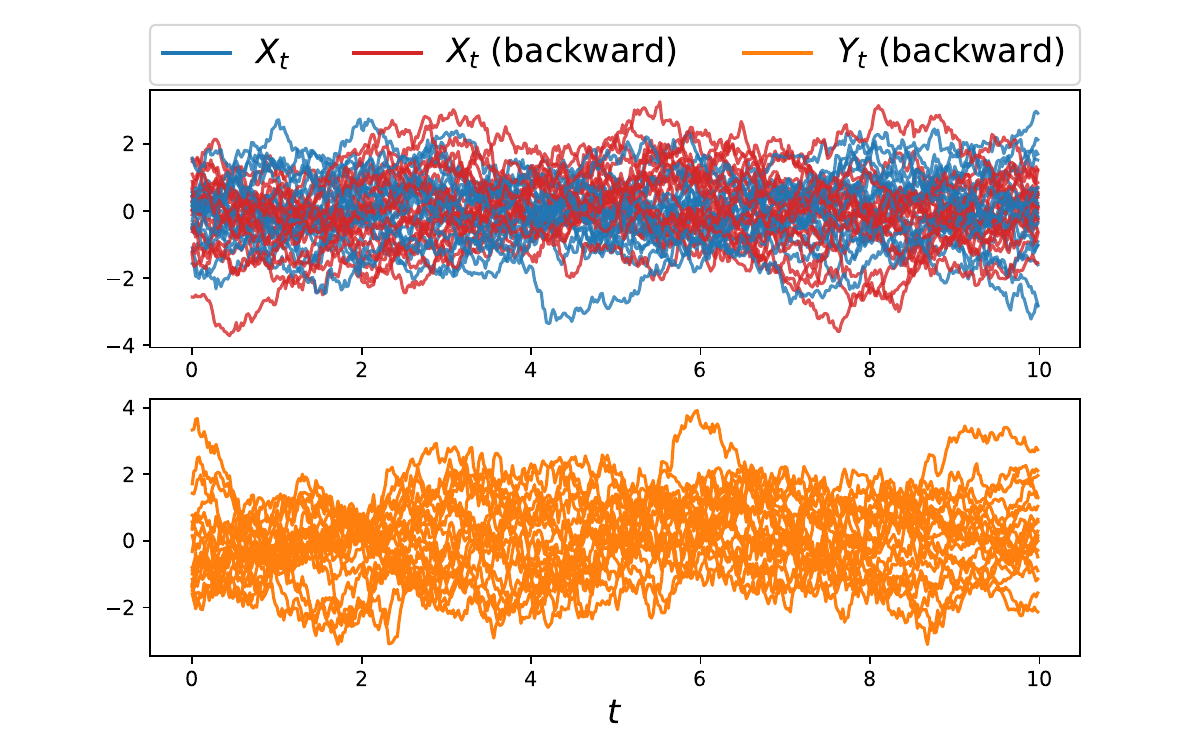} \hfill 
	\includegraphics[width=0.3\hsize,trim={50pt 0pt 50pt 0 pt},clip]{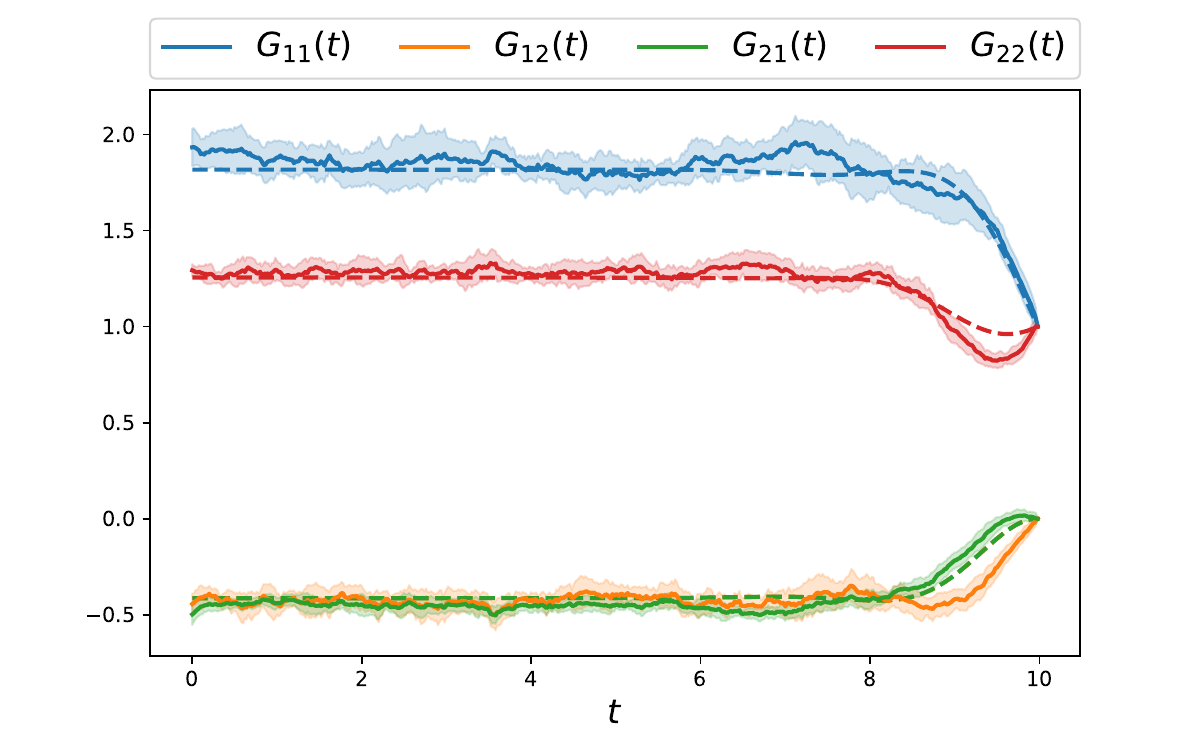} \hfill 
	\includegraphics[width=0.3\hsize,trim={50pt 0pt 50pt 0 pt},clip]{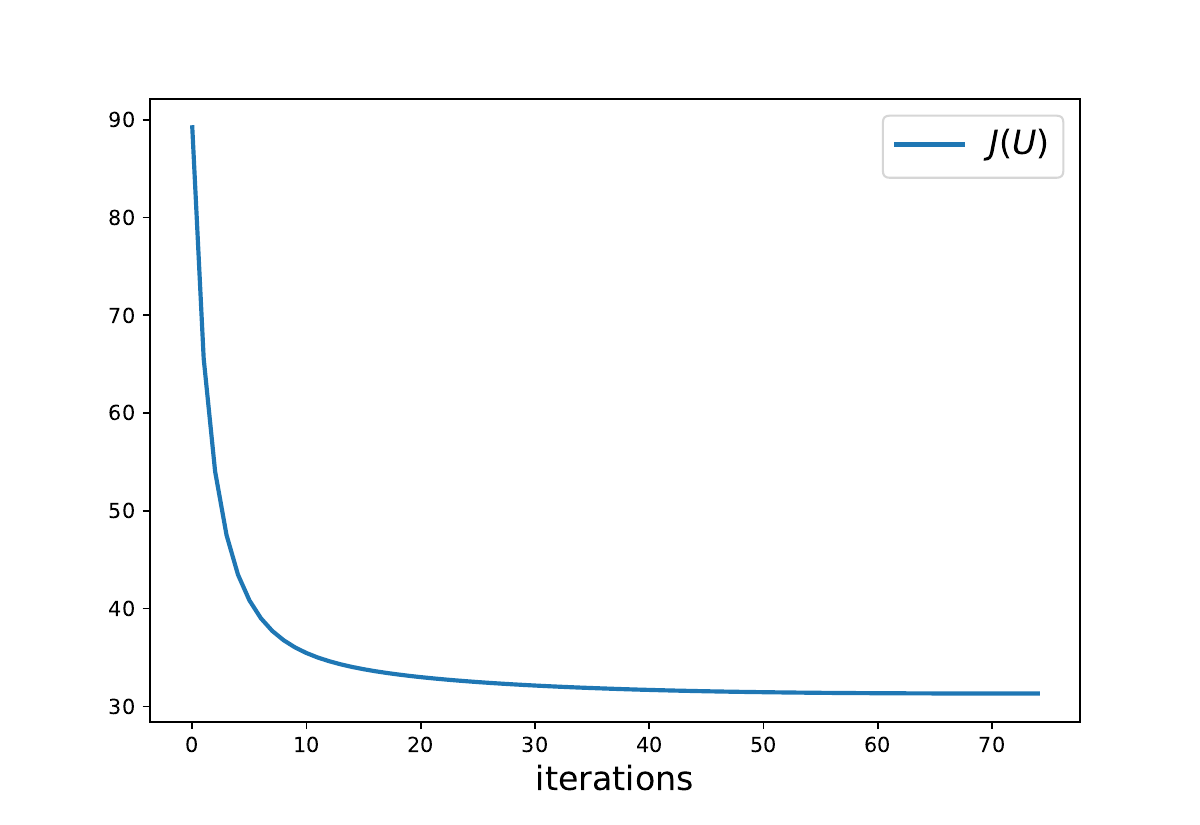}\\
	(a) \hspace{160pt} (b)  \hspace{160pt}  (c) 
	\caption{Numerical results for a two-dimensional linear quadratic stochastic optimal control problem presented in Sec.~\ref{sec:num-example}: (a) Sampled trajectories of the state~\eqref{eq:alg-forward}, its time-reversal~\eqref{eq:alg-reverse}, and the adjoint process~\eqref{eq:alg-adjoint}; (b) numerical approxiamtion of the matrix $G(t)$ according to~\eqref{eq:regression} in comparison to its exact value according to~\eqref{eq:Ricatti}; (c) value of the objective function~\eqref{eq:control-cost} as the number of algorithm iterations increases.}
	\label{fig:numerics}
\end{figure*}

\section{Numerical Algorithm} \label{sec:numerics}
A numerical procedure  is proposed to approximate the 
%
%
solution to the FBSDE~\eqref{eq:forward-sde-lin}-\eqref{eq:backward-sde-lin}, its time-reversal~\eqref{eq:forward-sde-lin-reversed}-\eqref{eq:backward-sde-lin-reversed}, and the optimal control input. While the procedure is presented for the LQ setting, it is generalizable to the nonlinear case as explained in Remark~\ref{remark:nonlinear}.

\begin{subequations}
\newP{Initialization}
The algorithm begins with generating $N$ independent samples $\{X^i_0\}_{i=1}^N$ from the initial distribution $p_0=\mathcal N(m_0,\Sigma_0)$.  Each sample $X^i_0$, for $i=1,\ldots,N$,  is then paired with  a zero  initialization of the  control inputs $U^{i,0}_t=\cev U^{i,0}_t=0$, and a random realization of the Weiner processes $W^i_t$ and $\tilde W^i_t$, that will be used for the simulation of the FBSDE and its time-reversal, respectively. 

\newP{Iterations}
The $k$-th iteration, for $k=1,2,3,\ldots,k_ f$, involves the following steps:
\begin{enumerate}
	\item Forward simulations of~\eqref{eq:forward-sde-lin} according to 
	\begin{align}\label{eq:alg-forward}
		\ud X^{i,k}_t \!= \!AX^{i,k}_t\ud t + B  U^{i,k-1}_t\ud t + \sigma \ud W^i_t,~ X^{i,k}_0=X^i_0
	\end{align}
for $i=1,\ldots,N$ and $t\in[0,T]$. 
\item Approximation of the  F\"olmer's drift. For the LQ setting, this amounts to approximating the mean  and covariance according to
\begin{align*}
	m^{k}_t \approx \frac{1}{N}\sum_{i=1}^NX^{i,k}_t,~ \Sigma^{k}_t \approx \frac{1}{N}\sum_{i=1}^N (X^{i,k}_t - m^{k}_t)(X^{i,k}_t - m^{k}_t )^\top. 
\end{align*}
	\item Simulation of time-reversed FBSDE~\eqref{eq:forward-sde-lin-reversed}-\eqref{eq:backward-sde-lin-reversed} according to 
	\begin{align}\label{eq:alg-reverse}
		&\ud  \cev X^{k,i}_t = A \cev X^{k,i}_t \ud t+ B \cev U^{i,k-1}_t \ud t + \sigma \cev \ud {\tilde W}^i_t\\&+ D (\Sigma^{k}_{t})^{-1}(\cev X^{i,k}_t - m^{k}_{t})\ud t,
		~   X^{i,k}_{T} \sim \mathcal N(m^{k}_{T},\Sigma^{k}_{T}),\nonumber\\\label{eq:alg-adjoint}
		-&\ud \cev Y^{i,k}_t = A^\top  \cev Y^{i,k}_t \ud t + Q  \cev X^{i,k}_t \ud t - G_1^{k}(t) \sigma \cev  \ud \tilde W^i_t \\& - G_1^{k}(t)D (\Sigma^{k}_{t})^{-1}(\cev X^{i,k}_t - m^{k}_{t})\ud t , 	\quad   Y^{i,k}_{T} = Q_f X^{i,k}_{T},\nonumber
	\end{align}
for $i=1,\ldots,N$ and $t\in[0,T]$, 
	where 
	\begin{align}\label{eq:regression}
		G_1^{k}(t) &= \argmin_{G}\, \frac{1}{N}\sum_{i=1}^N \|\cev Y^{i,k}_t - G\cev X^{i,k}_t \|^2. 
	\end{align}
	\item Gradient descent update of the control to minimize the Hamiltonian according to 
	\begin{align*}
		 U^{i,k}_t &=  U^{i,k-1}_t - \eta \frac{\partial H}{\partial u} (X^{i,k}_t, U^{i,k-1}_t,Y^{i,k}_t,Z^{k}_t),\\
		\cev U^{i,k}_t &= \cev U^{i,k-1}_t - \eta \frac{\partial H}{\partial u} (\cev X^{i,k}_t, \cev U^{i,k-1}_t,\cev Y^{i,k}_t,\cev Z^{k}_t),
	\end{align*}
	for a small step-size $\eta>0$, where $Z^{k}_t = 	\cev Z^{k}_t = G_1^{k}(t)  \sigma$, $Y^{i,k}_t = G_1^{k}(t)X^{i,k}_t$, and $\frac{\partial H}{\partial u} (x,u,y,z)  = Ru + B^\top y$. 
\end{enumerate}
\medskip 

\end{subequations}


\newP{Output} At the end of the iterations, the samples  $(X^{i,k_f}_t,U^{i,k_f}_t,G^{k_f}_1(t)X^{i,k_f}_t,G^{k_f}_1(t)\sigma)_{i=1}^N$ 
  are expected to approximate the probability law of the quadruple $(\hat X, \hat U, \hat Y, \hat Z)$ that satisfy the SMP conditions~\eqref{eq:forward-sde}-\eqref{eq:backward-sde}-\eqref{eq:min-condition} in the LQ setting. 
A feedback control law is obtained according to  $U_t=\argmin_u H(X_t,u,Y_t,Z_t) = -R^{-1}B^\top Y_t\approx -R^{-1}B^\top G^{k_f}_1(t)X_t$, given a realization of the state process $X_t$.  

\medskip 
\begin{remark}\label{remark:nonlinear}
	In order to extend  the procedure to the nonlinear setting, it is necessary to use a score function approximation algorithm in step 2, and solve a nonlinear regression problem 
$
	\min_{\phi}\,\frac{1}{N}\sum_{i=1}^N \|\cev Y^{i,k}_t - \phi(\cev X^{i,k}_t) \|^2
$
to find $\phi^k(t,x)$ 
for all $t\in [0,T]$ in step 3. When $\phi^k(t,x)$  is available, we can evaluate the Wasserstein derivative according to Lemma~\ref{lem:W-derivative} and  use $\cev Z_t= \frac{\partial \phi}{\partial x} (t,\cev X_t) \sigma(\cev X_t)$ to find $\cev Z_t$. 
\end{remark}

\subsection{Numerical Example} \label{sec:num-example}
The algorithm is illustrated for a two-dimensional mass-spring example with the following model parameters 
\begin{align*}
&A = \begin{bmatrix}
	0 & 1 \\
	-1 & 0
\end{bmatrix},~B = \begin{bmatrix}
0\\1
\end{bmatrix},~\sigma = \begin{bmatrix}
1 & 0 \\
0 & 1
\end{bmatrix},\\&R=1,~m_0=\begin{bmatrix}
0\\0
\end{bmatrix},\quad Q=Q_f=\Sigma_0=\begin{bmatrix}
1 & 0 \\
0 & 1
\end{bmatrix}.
\end{align*}
The SDEs are simulated using the Euler-Maruyama method  with  $\Delta t = 0.02$.  
The number of samples  $N=1000$, iterations $k_f=75$, and optimization step-size $\eta=0.02$.  The numerical results are shown in Fig.~\ref{fig:numerics}. Panel~(a) presents the first component of the trajectories $X^{i,k_f}_t$, $\cev X^{i,k_f}_t$, and $\cev Y^{i,k_f}_t$ for $20$ randomly selected samples.  Panel~(b)  shows the components of the matrix $G^{k_f}(t)$, averaged over $10$ independent simulations, in comparison with their exact value, derived by solving the Riccati equation~\eqref{eq:Ricatti}. Finally, panel~(c) shows the convergence of the value of the objective function~\eqref{eq:control-cost}  as the number of iterations increases. 
\section{Conclusion}
This paper presents the time-reversal of the FBSDE that appears in stochastic maximum principle. The time-reversal is used to propose an iterative numerical algorithm to approximately solve the FBSDE and obtain the optimal control input. The algorithm is presented and numerically illustrated  for the LQ setting. The extension to the nonlinear setting and subsequent comparison to the related algorithms is the subject of future research.   

\bibliographystyle{IEEEtran}
\bibliography{references}

\end{document}